\documentclass[11pt]{article}

\usepackage{graphicx}
\usepackage{amssymb}
\usepackage{fancyhdr}
\usepackage{xcolor}
\usepackage{url}
\usepackage{mathtools}
\usepackage{amsmath}
\usepackage{adjustbox}
\usepackage{tabularx}
\usepackage[utf8]{inputenc}
\usepackage[english]{babel}
\usepackage{breakcites}
\usepackage{tabu}
\usepackage{mathrsfs}
\usepackage{dsfont}
\usepackage{booktabs}
\usepackage{amsthm}
\usepackage{hyperref}
\usepackage{rotating}
\usepackage{graphicx}
\usepackage{lscape}
\usepackage{enumitem}
\usepackage{bbm}
\usepackage[]{interval}
\usepackage{setspace}

\usepackage{geometry}
\geometry{
 a4paper,
 left=25mm,
 right=25mm, 
 top=25mm,
 bottom=25mm
 }

\intervalconfig{
  soft open fences,
  separator symbol =;,
}
\definecolor{cornellred}{rgb}{0.7, 0.11, 0.11}
\hypersetup{colorlinks,linkcolor={red},citecolor={cornellred},urlcolor={red}}

\newtheoremstyle{break}
  {\topsep}{\topsep}%
  {\itshape}{}%
  {\bfseries}{}%
  {\newline}{}%
  
\theoremstyle{break}
\newtheorem{definition}{Definition}

\newtheorem{theorem}{Theorem}

\newtheorem{fact}{Fact}
\newtheorem{assumption}{A}\setcounter{assumption}{-1}
\newtheorem{assumption_output}{AY}

\theoremstyle{definition}
\newtheorem{remark}{Remark}

\DeclareMathOperator{\spargel}{sp}
\DeclareMathOperator{\cspargel}{\overline{\spargel}}

\DeclareMathOperator{\diag}{diag}
\DeclareMathOperator{\rank}{rk}

\DeclareMathOperator{\E}{\mathbb E}
\DeclareMathOperator{\V}{\mathbb V}
\DeclareMathOperator{\Prob}{\mathbb P}
\DeclareMathOperator{\proj}{proj}

\DeclareMathOperator{\esssup}{ess\,sup}

\DeclareMathOperator{\mslim}{ms\,lim}

\renewcommand{\l}{\left}
\renewcommand{\r}{\right}

\newcommand\utimes{\mathbin{\ooalign{$\cup$\cr%
   \hfil\raise0.42ex\hbox{$\scriptscriptstyle\times$}\hfil\cr}}}
\newcommand\bigutimes{\mathop{\ooalign{$\bigcup$\cr%
   \hfil\raise0.36ex\hbox{$\scriptscriptstyle\boldsymbol{\times}$}\hfil\cr}}}

\makeatletter
\renewenvironment{proof}[1][\proofname]{%
  \par\pushQED{\qed}\normalfont%
  \topsep6\p@\@plus6\p@\relax
  \trivlist\item[\hskip\labelsep\bfseries#1\@addpunct{.}]%
  \ignorespaces
}{%
  \popQED\endtrivlist\@endpefalse
}
\makeatother

\DeclarePairedDelimiter\abs{\lvert}{\rvert}%
\DeclarePairedDelimiter\norm{\lVert}{\rVert}%

\makeatletter
\let\oldabs\abs
\def\abs{\@ifstar{\oldabs}{\oldabs*}}
\let\oldnorm\norm
\def\norm{\@ifstar{\oldnorm}{\oldnorm*}}
\makeatother

\usepackage[round]{natbib}

\makeatletter
\newcommand{\bianca}{\renewcommand\NAT@open{[}\renewcommand\NAT@close{]}}
\makeatother

\rfoot{Page \thepage}
\providecommand{\keywords}[1]{\textbf{\textit{Index terms---}} #1}

\rfoot{Page \thepage}
\providecommand{\keywords}[1]{\textbf{\textit{Index terms---}} #1}

\usepackage{stackengine}
\newcommand\ubar[1]{\stackunder[1.2pt]{$#1$}{\rule{.8ex}{.075ex}}}

%
\begin{document}
\onehalfspacing
%
\title{On the Existence of One-Sided Representations for the Generalised Dynamic Factor Model}
\author{
\textsc{Philipp Gersing}\footnote{Department of Statistics and Operations Research, University of Vienna, \url{philipp.gersing@univie.ac.at}},
}
\maketitle
%
%
%
%
%
%
\begin{abstract}

We show that the common component of the Generalised Dynamic Factor Model (GDFM) can be represented using only current and past observations basically whenever it is purely non-deterministic. 

\end{abstract}
\keywords{Generalized Dynamic Factor Model, Representation Theory, MSC: 91B84}
%
%
%
%
%
%
%

%
%
%
%
%
%
%
%
%
%
%
%
%
\section{Introduction}
There are two main approaches to approximate factor models in time series: a) the dynamic approach, i.e., the Generalised Dynamic Factor Model (GDFM) based on dynamic principal components \citep{forni2000generalized, forni2001generalized}, and b) the static approach based on static principal components \citep{chamberlain1983arbitrage, chamberlain1983funds, stock2002forecasting, stock2002macroeconomic, bai2002determining}. Contrary to the common view in the literature, these are fundamentally different decompositions, each imposing distinct interpretations of what is ``common'' and ``idiosyncratic.'' For an in-depth discussion, see \cite{gersing2023reconciling, gersing2024weak} which was extended to the time domain by \cite{barigozzi2024dynamic}.

We say that a process is causally subordinated to the data if it can be expressed purely in terms of current and past values of observed variables. In the static approach, the \textit{static common component} is a linear combination of only contemporaneous observed variables, making it trivially causally subordinated. By contrast, the \textit{dynamic common component} of the GDFM is, in its original form \citep[see][]{forni2001generalized}, the mean-square limit of a dynamic low-rank approximation using lags \textit{and leads}.

This paper shows that the use of leads is only a matter of representation: under fairly general conditions, the dynamic common component can instead be written using only current and past variables. Specifically, we prove that its innovations (one-step ahead prediction errors) remain causally subordinated to the data, provided the component is purely non-deterministic and the transfer function meets a mild condition related to causal invertibility. Unlike low rank approximations via dynamic principal components in general, this \textit{one-sidedness} is a distinctive feature of the GDFM which is, as shown in this paper, implied by the special behaviour of its spectral eigenvalues. 

This result highlights why, from an economic perspective, the dynamic decomposition is of primary interest compared to the static decomposition. Interpreting innovations of the dynamic common component as ``common structural shocks of the economy'', the dynamic common component is the projection of observed variables onto the infinite past of those shocks \citep[see also][who interpret the dynamic idiosyncratic component as measurement error]{lippi2021validating, forni2025common}. Consequently, impulse response analysis in time series factor models should focus on how observed variables respond to these structural shocks and therefore be concerned with the dynamic common component. In contrast, the static decomposition captures only the part that is contemporaneously common.

In summary, the apparent two-sidedness of the classical GDFM is not an inherent flaw but a choice of representation. Our result reinforces the theoretical foundation of the GDFM by proving that, under mild conditions, the dynamic common component can always be represented in a causally subordinated, forecasting-relevant form.  This paper is intended as a theoretical contribution: (1) to establish the interpretation of the dynamic common component as the response to common structural shocks, and (2) to justify starting future research from a one-sided representation. 

The remainder of the paper is structured as follows. We begin in Section \ref{subsec: the main result and idea of the proof} by formally stating the main result and outlining the core idea of the proof. Next, Section \ref{sec: general setup} introduces the assumptions and notation underlying the GDFM. In Section \ref{sec: PND for infinite dimension}, we define purely non-deterministic processes in the infinite-dimensional, rank-deficient case and discuss aspects of causal invertibility. The main proof is presented in Section \ref{sec: one-sidedness of the common shocks}: we first show in Theorem \ref{thm: reordering to full rank blocks} how to construct infinitely many full-rank $q\times q$ transfer-function blocks, then establish causal subordination under a strict minimum phase condition in Theorem \ref{thm: causal subordination}, and finally relax this condition to include cases with the same zeros on the unit circle in infinitely many rows. The paper concludes with Section \ref{sec: conclusion}.

\subsection{The Main Result and Idea of the Proof}\label{subsec: the main result and idea of the proof}

To fix ideas, consider an infinite-dimensional time series as a double indexed (zero-mean, stationary) stochastic process $(y_{it}: i \in \mathbb N, t \in \mathbb Z)=(y_{it})$, indexed by cross-section $i \in \mathbb N$ and time $t \in \mathbb Z$. The GDFM decomposes \begin{align}
    y_{it} = \chi_{it} + \xi_{it} = \ubar b_i(L) u_t + \xi_{it} = \sum_{j = -\infty}^\infty B_i(j) u_{t-j} + \xi_{it}, \qquad u_t \sim WN(I_q) \label{eq: first GDFM in the world}
\end{align}
where $(u_t)$ is a $q$-dimensional orthonormal white noise process driving the dynamic common component $(\chi_{it})$ via square-summable filters $\ubar b_i(L)$, and $(\xi_{it})$ is the dynamic idiosyncratic component, weakly correlated over time and cross-section.

While the decomposition into common and idiosyncratic parts is unique, there are infinitely many equivalent representations of the filters and factor process. For any orthonormal $q\times q$ filter $\ubar c(L)$, we have
\begin{align}
\chi_{it} = \ubar b_i (L) \ubar c^*(L) \ubar c(L) u_t = \tilde{\ubar b}_i (L) \tilde u_t, \label{eq: pre and post mult rotation filters}
\end{align}
with $\tilde u_t = \ubar c(L) u_t$ being orthonormal white noise. The most straightforward way to estimate the GDFM is via dynamic principal components, leading to two-sided filters and factors \citep{forni2001generalized, forni2004generalized, hallin2007determining}, which cannot be used for forecasting. 

We show that whenever the common component is purely non-deterministic (plus a mild regularity condition) - a standard assumption in time series analysis — it admits a representation
\begin{align}
    \chi_{it} = \ubar k_i (L) \varepsilon_t = \sum_{j = 0}^\infty K_i (j) \varepsilon_{t-j}, \qquad \varepsilon_t \sim WN(I_q) \label{eq: first onsided rep of the world} 
\end{align}
where $\varepsilon_t \in \cspargel(y_{is} : i \in \mathbb N, s \leq t) := \mathbb H_t(y)$ and $\sum_{j = 0}^\infty \norm{K_i (j)}^2 <\infty$, where $\cspargel(\cdot)$ denotes the closed linear span. Here $(\varepsilon_t)$ is the innovation process of $(\chi_{it})$. This innovation form of the GDFM is unique (up to a real orthogonal matrix) and naturally one-sided.

Earlier work addressed the one-sidedness problem by imposing linear models on the dynamic common component: \cite{forni2005generalized} used a static factor structure with VAR dynamics, while \cite{forni2011general, forni2015dynamic, forni2017dynamic, barigozzi2024inferential} modeled the common component as VARMA, achieving one-sided representations in the shocks and output. This paper generalises these results extending the idea used in \cite{forni2015dynamic}: Write the GDFM in blocked form, with transfer-function blocks $\ubar k^{(j)}(L)$, $j = 1, 2,...$ of dimension $q\times q$: 
\begin{align}
    y_t &= \chi_t + \xi_t = \begin{pmatrix}
        \ubar k^{(1)}(L) & & \\
                         & \ubar k^{(2)}(L) & \\
                        && \ddots 
    \end{pmatrix} \begin{pmatrix}
        I_q \\
        I_q \\
        \vdots
    \end{pmatrix} \varepsilon_t + \xi_t  \nonumber \\[0.8em]
    \phi_t &:= \l(\ubar k(L)\r)^{-1} y_t  = \begin{pmatrix}
        I_q \\
        I_q \\
        \vdots
    \end{pmatrix} \varepsilon_t + \begin{pmatrix}
        \l(\ubar k^{(1)}(L)\r)^{-1} & & \\
                         & \l(\ubar k^{(2)}(L)\r)^{-1} & \\
                        && \ddots 
    \end{pmatrix} \xi_t \label{eq: phi preview}
\end{align}

If $(\chi_{it})$ is purely non-deterministic and $\ubar k(L)$ is the transfer-function of its Wold representation, the inverse of $\ubar k(L)$ is also causal. Note that $(\phi_t)$ resembles a static factor structure with $(\varepsilon_t)$ as its factors.  Suppose that all inverse transfer-functions $\l(\ubar k^{(j)}(L)\r)^{-1}$ are causal and such that the second term on the RHS of equation \eqref{eq: phi preview} is statically idiosyncratic. We can retrieve $\varepsilon_t$ causally from $(y_{it})$ by applying static principal components to $\phi_t$.

\section{General Setup}\label{sec: general setup}
\subsection{Notation}
%
%
%
Let $\mathcal P = (\Omega, \mathcal A, \Prob)$ be a probability space and $L_2(\mathcal P, \mathbb C)$ be the Hilbert space of square integrable complex-valued, zero-mean, random-variables defined on $\Omega$ equipped with the inner product $\langle u, v\rangle =  \E[u \bar v]$ for $u, v \in L_2(\mathcal P, \mathbb C)$. We suppose that $(y_{it})$ lives in $L_2(\mathcal P, \mathbb C)$ using the following abbreviations: $\mathbb H(y) := \cspargel(y_{it}: i \in \mathbb N, t \in \mathbb Z)$, the ``time domain'' of $(y_{it})$, $\mathbb H_t(y):= \cspargel(y_{is}: i \in \mathbb N, s \leq t)$, the ``infinite past'' of $(y_{it})$. We write $y_t^n = (y_{1t}, ..., y_{nt})'$ and by $f_y^n(\theta)$ we denote the ``usual spectrum'' of $(y_t^n)$ times $2\pi$, i.e., $\Gamma_y^n := \E\left[y_t^n (y_t^n)^*\right] = (2\pi)^{-1} \int_{-\pi}^\pi f_y^n(\theta) d\theta$. For a stochastic vector $u$ with coordinates in $L_2(\mathcal P, \mathbb C)$, we write $\V\l[u\r] := \E\left[u u^*\right]$ to denote the variance matrix. Let $u$ be a stochastic vector with coordinates in $L_2(\mathcal P, \mathbb C)$, let $\mathbb M\subset L_2(\mathcal P, \mathbb C)$ be a closed subspace. We denote by $\proj(u \mid \mathbb M)$ the orthogonal projections of $u$ onto $\mathbb M$ \citep[see e.g.][Theorem 1.2]{deistler2022modelle} (coordinate-wise). 
Furthermore we denote by $\mu_i(A)$ the $i$-th largest eigenvalue of a square matrix $A$. If $A$ is a spectral density $\mu_i(A)$ is a measurable function in the frequency $\theta \in [-\pi, \pi]$. More generally denote by $\sigma_i(A)$ the $i$-th largest singular value of a matrix (not necessarily square). 

\subsection{The Generalised Dynamic Factor Model}
Throughout we assume stationarity of $(y_{it})$ in the following sense:

\begin{assumption}[Stationary Double Sequence]\label{A: stat}
The process $(y^n_t : t \in \mathbb Z)$ is real valued, weakly stationary with zero-mean and such that
\begin{itemize}
    \item[(i)] $y_{it} \in L_2(\mathcal P, \mathbb C)$ for all $(i, t) \in \mathbb N \times \mathbb Z;$
    \item [(ii)] it has existing (nested) spectral density $f_{y}^n(\theta)$ for $\theta \in [-\pi, \pi]$ defined as the $n\times n$ matrix:
\[
f_{y}^n(\theta)=\frac 1{2\pi}\sum_{\ell=-\infty}^{\infty} e^{-\iota \ell \theta} 
\E [y_t^n y_{t-\ell}^{n'}],\quad  \theta \in[-\pi,\pi].
\]
\end{itemize}
\end{assumption}
In addition we assume that $(y_{it})$ has a $q$-dynamic factor structure as in \cite{forni2001generalized, hallin2011dynamic}: Denote by ``$\esssup$'' the essential supremum of a measurable function, we assume
\begin{assumption}[$q$-Dynamic Factor Structure]\label{A: q-DFS struct}
The process $(y_{it})$ is such that there exists $q < \infty$, with
\begin{itemize}    
    \item[(i)] $\sup_{n\in \mathbb N} \mu_q\left(f_y^n\right) = \infty$ almost everywhere on $[-\pi, \pi]$; 
    \item[(ii)] $\esssup_{\theta \in [-\pi, \pi]} \sup_{n\in \mathbb N} \mu_{q+1}(f_y^n)<  \infty$. 
\end{itemize}
\end{assumption}
By \cite{forni2001generalized} Assumption A\ref{A: q-DFS struct} is equivalent to the existence of the representation \eqref{eq: first GDFM in the world} with $(\xi_{it})$ and $(\chi_{it})$ being orthogonal at all leads and lags $\sup_{n\in \mathbb N}\mu_q\l(f_\chi^n(\theta)\r) = \infty$ almost everywhere on $[-\pi, \pi]$ and $\esssup_{\theta \in [-\pi, \pi]} \sup_{n\in \mathbb N} f_\xi^n(\theta) < \infty$. Note that this also implies that the shocks of $(\chi_{it})$ are orthogonal to $(\xi_{it})$ at all leads and lags and that $f_y^n(\theta) = f_\chi^n(\theta) + f_\xi^n(\theta)$ for $\theta$ almost everywhere in $[-\pi, \pi]$. 

We may also describe \eqref{eq: first GDFM in the world} as a representation rather than a ``model,'' since the existence of this dynamic decomposition follows from the characteristic eigenvalue behaviour of the double-indexed process 
$(y_{it})$. Specifically, the divergence of the first $q$ eigenvalues of $f_\chi^n$ captures the sense in which the filter loadings in \eqref{eq: first GDFM in the world} are pervasive. Meanwhile, the essential boundedness of $\sup_{n \in \mathbb N}\mu_1 \l( f_\xi^n \r)$ defines dynamic idiosyncraticness, ensuring that the dynamic idiosyncratic component is only weakly correlated across the cross-section and over time.

The approach of \cite{hallin2013factor} is even more general: the dynamic common component is defined as the projection onto the Hilbert space spanned by so-called ``dynamic aggregates,'' which arise as limits of weighted averages in the time domain \citep[see also][for the frequency domain version]{forni2001generalized}. The dynamic idiosyncratic component is then simply the residual from this projection. In principle, the GDFM framework requires only stationarity - without even assuming the existence of a spectral density - to state the decomposition directly. However, this also admits wired cases, such as $q = \infty$.
\section{Infinite Dimensional PND-Processes and Causal invertibility}\label{sec: PND for infinite dimension}
We begin with recalling some basic facts related to purely non-deterministic processes. Suppose for now that $(x_t) = (x_t^n)$ is a finite dimensional a zero-mean weakly stationary process. We call $\mathbb H^-(x) := \bigcap_{t\in \mathbb Z} \mathbb H_t(x)$ the remote past of $(x_t)$.
\begin{definition}[Purely Non-Deterministic and Purely Deterministic Stationary Process]\label{def: PND via remote past}
If $\mathbb H^-(x) = \{0\}$, then $(x_t)$ is called purely non-deterministic (PND) or regular. If $\mathbb H^-(x) = \mathbb H(x)$, then $(x_t)$ is called (purely) deterministic (PD) or singular.
\end{definition}
Note that ``singular'' in the sense of being purely deterministic must not to be confused with processes that have rank deficient spectrum, like the common component of the GDFM, and are also called ``singular'' in the literature \citep[][]{anderson2008generalized, deistler2010generalized, forni2024approximating}. Therefore we shall use PND and PD henceforth to avoid confusion. 

Of course there are processes between the two extremes of PD and PND. The future values of a PD process can be predicted perfectly (in terms of mean squared error). On the other hand, a PND process is entirely governed by random innovations. It can only be predicted with positive mean squared error and the further we want to predict ahead, the less variation we can explain: Set 
\begin{align*}
\nu_{h|t} &:= x_{t+h} - \proj(x_{t+h} \mid \mathbb H_t(x)), 
\end{align*}
which is the $h$-step ahead prediction error. If $(x_t)$ is PND, then $\lim_{h\to \infty} \V\l[\nu_{h|t}\r] \to \V\l[x_t\r]$. 

Next, we recall some basic facts about the finite dimensional case. By Wold's representation Theorem \citep[see][]{hannan2012statistical, deistler2022modelle}, any weakly stationary process can be written as sum of a PD and PND process, being mutually orthogonal at all leads and lags. 

There are several characterisations for PND processes. Firstly, the Wold decomposition implies \citep[see e.g.][]{rozanov1967stationary, masani1957prediction} that $(x_t)$ is PND if and only if it can be written as a causal infinite moving average
\begin{align}
    x_t = \nu_t + \sum_{j = 1}^\infty C(j)\nu_{t-j}, \label{eq: wold rep of x nu}
\end{align}
where $\nu_t:= \nu_{1|t-1}$ is the innovation of $(x_t)$, possibly with reduced rank $\rank \V\l[\nu_t\r] := q \leq n$. If $q<n$ we may uniquely factorise $\V\l[\nu_t\r] = bb'$. Assume without loss of generality that the first $q$ rows of $b$ have full rank (otherwise reorder), a unique factor is obtained choosing $b$ to be upper triangular with positive entries on the main diagonal. This results in the representation
\begin{align}
    x_t = \sum_{j = 0}^\infty K(j) \varepsilon_{t-j}, \label{eq: wold rep of x eps}
\end{align}
with $K(j) = C(j)b$ and $\V\l[\varepsilon_t\r] = I_q$. 

Secondly, a stationary process is PND \citep[see][]{rozanov1967stationary, masani1957prediction} if and only if the spectral density has constant rank $q\leq n$ almost everywhere on $[-\pi, \pi]$ and can be factored as  
\begin{align}
    &f_x(\theta) = 
    \underbrace{k(\theta)}_{n\times q} k^*(\theta) \label{eq: spectral factorisation x} \\
   \mbox{while} \quad & k(\theta) = \sum_{j = 0}^\infty K(j)e^{-\iota \theta }, \quad \sum_{j=0}^\infty \norm{K(j)}_F^2 < \infty,  \label{eq: k(z) of x} 
\end{align}
$\norm{\cdot}_F$ denotes the Frobenius norm and
\begin{align}
   k(\theta) = \ubar k(e^{-\iota \theta}),\quad  \theta \in (-\pi, \pi], \quad \ubar k(z) = \sum_{j=0}^\infty K(j)z^j, \quad z \in D, \label{eq: spectral factor of Wold for xt}
\end{align}
here $z$ denotes a complex number. The entries of the spectral factor $\ubar k(z)$ are analytic functions in the open unit disc $D$ and belong to the class $L^2(T)$, i.e., are square integrable on the unit circle $T$. If we employ the normalisation $K(0) = b$ from above, the transfer-function $\ubar k(z)$ from \eqref{eq: k(z) of x} which corresponds to the Wold representation \eqref{eq: wold rep of x eps} is causally invertible, i.e. $\mathbb H_t(\varepsilon) \subset \mathbb H_t(x)$. Causal invertibility is equivalent to $\rank k(z) = q$ for all $\abs{z} < 1$; there are no zeros inside the unit circle. We say also that the shocks $(\varepsilon_t)$ are \textit{fundamental} for $(x_t)$. 

\begin{fact}[\cite{szabados2022regular}]\label{fact: subselection}
If $(x_t)$ is PND with $\rank f_x = q < n$ almost everywhere on $[-\pi, \pi]$, then there exists a $q$-dimensional sub-vector of $x_t$, say $\tilde x_t = (x_{i_1,t}, ..., x_{i_q, t})'$ of full dynamic rank, i.e., $\rank f_{\tilde x} = q$ almost everywhere on $[-\pi, \pi]$.  
\end{fact}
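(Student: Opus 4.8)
The plan is to reduce the rank statement about the spectral density $f_x$ to a statement about a single $q\times q$ minor of its spectral factor, and then to exploit analyticity to upgrade ``nonzero somewhere'' into ``nonzero almost everywhere on the unit circle.'' First I would invoke the spectral factorisation from the PND characterisation \eqref{eq: spectral factorisation x}--\eqref{eq: spectral factor of Wold for xt}: since $(x_t)$ is PND with $\rank f_x = q$ a.e., we have $f_x(\theta) = k(\theta)k^*(\theta)$ with $k(\theta) = \ubar k(e^{-\iota\theta})$ an $n\times q$ matrix whose entries are boundary values of functions analytic in the open disc $D$ and belonging to $H^2$. For an index set $S = \{i_1,\dots,i_q\}\subset\{1,\dots,n\}$, write $\ubar k_S$ for the $q\times q$ submatrix of $\ubar k$ formed by the rows in $S$. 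The spectral density of the subvector $\tilde x_t = (x_{i_1,t},\dots,x_{i_q,t})'$ is exactly the corresponding principal submatrix, namely $f_{\tilde x}(\theta) = \ubar k_S(e^{-\iota\theta})\,\ubar k_S^*(e^{-\iota\theta})$. Hence $f_{\tilde x}$ has full rank $q$ at $\theta$ if and only if the minor $M_S(\theta) := \det \ubar k_S(e^{-\iota\theta})$ is nonzero, so it suffices to exhibit one $S$ with $M_S\neq 0$ a.e. on the unit circle $T$.

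Next I would observe that each minor $M_S(z) = \det \ubar k_S(z)$ is analytic in $D$, being a finite sum of products of the analytic $H^2$ entries; in particular it lies in the Smirnov class $N^+$, since a product of $q$ functions in $H^2$ lies in $H^{2/q}\subset N^+$ by the generalised H\"older inequality, and $N^+$ is closed under finite sums and products. Because $\rank k(\theta) = q$ a.e. on $T$, for almost every $\theta$ at least one $q\times q$ minor is nonzero; consequently not all of the finitely many functions $M_S$ can vanish identically on $D$, so we may fix some index set $S^\ast$ with $M_{S^\ast}\not\equiv 0$.

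The crucial analytic input is the final step: a function in $N^+$ that is not identically zero has $\log\abs{M_{S^\ast}}$ integrable on $T$, and therefore nonzero boundary values almost everywhere. Applied to $M_{S^\ast}$ this yields $M_{S^\ast}(\theta)\neq 0$ for a.e. $\theta\in T$, so the rows indexed by $S^\ast$ produce a subvector $\tilde x_t$ with $\rank f_{\tilde x} = q$ almost everywhere, which is the assertion.

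I expect the last step to be the main obstacle. A naive measure-theoretic argument only shows that the union over $S$ of the sets $\{M_S\neq 0\}$ has full measure, which a priori could be realised by different $S$ on different arcs of $T$, with no single $S$ working almost everywhere; indeed for a merely measurable rank-$q$ density this conclusion can genuinely fail. What rescues the argument is precisely that the PND assumption forces the factor $\ubar k$ to be analytic, so that each minor is an $N^+$ function whose boundary zero set is null unless the minor vanishes identically. I would therefore take care to justify cleanly that the determinant inherits membership in a Hardy/Nevanlinna class for which the implication ``not identically zero $\Rightarrow$ nonzero boundary values a.e.'' is available.
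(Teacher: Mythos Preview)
Your proposal is correct and follows essentially the same route as the paper: factor $f_x = k k^*$, identify the principal minors of $f_x$ with $\lvert\det \ubar k_S\rvert^2$, use the dichotomy that each such determinant is either zero a.e.\ or nonzero a.e.\ on $T$, and combine this with $\rank f_x = q$ a.e.\ to extract one good index set. The only difference is that you justify the dichotomy more carefully via the Smirnov class $N^+$ and log-integrability of boundary values, whereas the paper simply asserts it ``because $M_{\ubar k}(z)$ is analytic in $D$''; your remark that mere analyticity (or mere measurability of $f_x$) would not suffice, and that one genuinely needs a Hardy/Nevanlinna-type input, is a worthwhile sharpening of the same argument.
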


To see why, we follow the proof of Theorem 2.1 in \cite{szabados2022regular}. A principal minor $M(\theta) = \det \left[(f_{x})_{i_j, i_l}\right]_{j, l = 1}^q$ of $f_x$ can be expressed by means of equation (\ref{eq: spectral factorisation x}) in terms of
\begin{align*}
    M_{f_x}(\theta) &= \det\left[\ubar k_{i_j, l}(e^{-\iota \theta})\right]_{j, l = 1}^q  \det \overline{\left[\ubar k_{i_j, l}(e^{-\iota \theta})\right]}_{j, l = 1}^q = \left|\det\left[\ubar k_{i_j, l}(e^{-\iota \theta})\right]_{j, l = 1}^q  \right|^2 := \left|M_{\underline{k}}(e^{-\iota \theta})\right|^2, 
\end{align*}
with the same row indices in the minor $M_{\underline{k}}(z)$ of $\ubar k(z)$ as in the principal minor $M(\theta)$ of $f_x$. We know that $M_{\underline{k}}(z) = 0$ almost everywhere or $M_{\underline{k}}(z) \neq 0$ almost everywhere because $M_{\underline{k}}(z)$ is analytic in $D$. Since $\rank f_x = q$ almost everywhere, the sum of all principal minors of $f_x$ of order $q$ is different from zero almost everywhere, so there exists at least one order $q$ principal minor of $f_x$ different almost everywhere from zero.

Let us now consider the case of an \textit{infinite dimensional rank deficient PND process} $(x_{it}: i \in \mathbb N, t \in \mathbb Z)$, so $\rank f_x^n = q$ almost everywhere on $[-\pi, \pi]$ for all $n \geq n_0$.
\begin{definition}[Purely non-deterministic rank-reduced stochastic double sequence]\label{def: character PND infinite dim}    
Let $(x_{it})$ be a stationary stochastic double sequence such that $\rank f_x^n = q <\infty$ almost everywhere on $[-\pi, \pi]$ for all $n\geq n_0$.  We say that $(x_{it})$ is PND if there exists an $n_1\geq n_0$ together with a $q$-dimensional orthonormal white noise process $(\varepsilon_t)\sim WN(I_q)$ such that
    \begin{itemize}
        \item[(i)]  $ \varepsilon_t \in \spargel\l(x_t^n -  \proj\l[x_t^n \mid \mathbb H_{t-1}(x^n)\r] \r)$ for all $n \geq n_1$ and $t\in \mathbb Z$; 
        \item[(ii)] $x_{it} \in \mathbb H_{t}(\varepsilon)$ for all $i\in \mathbb N$ and for all $i\in \mathbb N$ there is a causal transfer-function $\ubar k_i(L)$ such that 
            \begin{align}
                x_{it} = \ubar k_i (L) \varepsilon_t = \sum_{j = 0}^\infty K_i(j) \varepsilon_{t-j}, \label{eq: def innovation form singular}
            \end{align}
        where $\sum_{j = 0}^\infty \norm{K_i(j)}^2  < \infty$ and $\ubar k_i(z)$ are analytic in the open unit disc $D$ for all $i \in \mathbb N$. 
    \end{itemize}
\end{definition}
We conjecture that Definitions \ref{def: PND via remote past} and \ref{def: character PND infinite dim} are equivalent also for the infinite dimensional rank deficient case, which is however not the objective of the present paper. Uniqueness of $(\varepsilon_t)$ can be achieved e.g. by selecting the first index set in order such that the process has full rank almost everywhere on $[-\pi, \pi]$ and imposing constraints as described below equation \eqref{eq: wold rep of x nu}. An example for a purely deterministic double sequence would be $x_{it} = \varepsilon_{t+i-1}$; here we can perfectly predict the infinite future at time $t$ from $(x_{it}: i\in \mathbb N)$. 

Next, we discuss fundamentalness in the infinite dimensional, rank deficient case. Thinking of $(x_t) = (x_{1t}, x_{2t}, ....)'$ as an infinite dimensional vector process, a full rank $q$-dimensional sub-block as in Fact \ref{fact: subselection} has a transfer-function that is invertible, but not necessarily causally invertible. For illustration, consider the following examples with $\varepsilon_t$ being scalar ($q = 1$) white noise with unit variance:\\
\noindent
\begin{tabular}{@{}m{0.3\textwidth}@{\quad}m{0.3\textwidth}@{\quad}m{0.3\textwidth}@{}}
  \vspace{-\baselineskip} 
  \begin{equation}
  x_t = \begin{pmatrix}
        1-3L \\
        1-3L \\
        \vdots
    \end{pmatrix} \varepsilon_t 
    \label{eq: exmp not the Wold rep}
  \end{equation}
  &
  \vspace{-\baselineskip} 
  \begin{equation}
  x_t = \begin{pmatrix}
        1-0.5L \\
        \hline
        1-3L \\
        1-3L \\
        \vdots 
    \end{pmatrix} \varepsilon_t
    \label{eq: exmp invertible but not infinitely often}
  \end{equation}
  &
  \vspace{-\baselineskip} 
  \begin{equation}
  x_t = \begin{pmatrix}
        1-3L \\
        1-2L \\
        \hline
        1-3L \\
        1-2L \\
        \hline
        \vdots
    \end{pmatrix} \varepsilon_t.
    \label{eq: exmp zeros inside but not in 2 dimensions}
  \end{equation}
\end{tabular}
\\
Note that in all three cases $(x_{it})$ is a stationary PND double sequence. 
\begin{itemize}
    \item Starting with example \eqref{eq: exmp not the Wold rep}, let $\ubar k^n(z)$ be the transfer-function of $(x_t^n) = (x_{1t}, ..., x_{nt})'$ for $n\in \mathbb N$ as in \eqref{eq: spectral factor of Wold for xt}. We note that $\rank \ubar k^n(z_0) = 0$ for $z_0 = 1/3$ for all $n \in \mathbb N$. Therefore $\ubar k^n(z)$ has a zero inside the unit circle and is not causally invertible for any $n\in \mathbb N$ and \eqref{eq: exmp not the Wold rep} is not the Wold representation (a non-causal inverse representation is given by $\varepsilon_t = -1/3 \sum_{j = 1}^\infty (1/3)^{j-1} x_{1, t+j}$).
    
    By the spectral factorisation we can obtain a causally invertible factor of the spectrum of the univariate processes $x_{it}$ for $i \in \mathbb N$ by mirroring the zero on the unit circle: Rewrite $f(z) = (1 - 3z)(1-3z^{-1}) = (1-3z^{-1})z \ (1 - 3z)z^{-1} = 3(1 - 1/3z) 3(1-1/3 z^{-1})$. Consequently, there is a white noise unit variance scalar innovation process, say $(\eta_t)$ which is different from $(\varepsilon_t)$, such that $x_{it} = 3 \eta_t - \eta_{t-1}$ associated with the causally invertible transfer-function $3(1 - 1/3L)$. Here $(\eta_t)$ is the innovation for each individual univariate process and also for the entire multivariate infinite dimensional process $(x_t)$.
    \item In example \eqref{eq: exmp invertible but not infinitely often}, setting $\tilde x_t := x_{1t}$, the associated transfer-function is causally invertible. So is the transfer-function of any other sub-process of dimension $n > 1$ which includes the first coordinate $x_{1t}$. It follows that $(\varepsilon_t)$ is the innovation process of the multivariate process $(x_t)$.
    
    On the other hand setting $\tilde x_t := x_{it}$ for $i \geq 2$, the transfer-function of $(\tilde x_t)$ is not causally invertible. Hence, even though $(\varepsilon_t)$ is the innovation for the multivariate rank-deficient process $(x_t)$ it is in general not the innovation for its full rank sub-blocks (compare example \eqref{eq: exmp not the Wold rep}). The first coordinate settles the innovation for the whole infinite dimensional process $(x_t)$ and \eqref{eq: exmp invertible but not infinitely often} is the Wold representation.
    \item Finally in example \eqref{eq: exmp zeros inside but not in 2 dimensions}, the associated transfer-functions of all one-dimensional sub-processes are not causally invertible. However the transfer-functions of $2$-dimensional sub-blocks such as $\tilde x_t = (x_{1t}, x_{2t})'$ are causally invertible. They have full rank for all $z\in \mathbb C$ and therefore also for all $z$ inside the unit circle. For instance a causal inverse is given by $\varepsilon_t = -2 x_{1t} + 3 x_{2t}$.
    
    Consequently, potential non-fundamentalness of the shocks with respect to the output can be tackled by adding new cross-sectional dimensions which are driven by the same shocks, and therefore ``remove'' zeros inside the unit circle. For example \cite{forni2025common} exploit this fact to make structural VAR analysis more robust. As has been shown by \cite{anderson2016structure}, a rank deficient VARMA system (i.e. $n>q$) has an autoregressive representation, i.e. $\rank k^n(z) = q$ for all $z\in \mathbb C$ generically in the parameter space. For a different approach to non-fundamentalness see \cite{funovits2024identifiability}. 
\end{itemize}
To summarise, an infinite-dimensional, rank-deficient PND process typically contains many full-rank sub-blocks. While these sub-blocks are not necessarily causally invertible, they are ``more likely'' to be so as the dimension of the sub-block grows. This is because potential zeros inside the unit circle can be compensated for by the contribution of additional rows in the transfer function. 

Econometric time series analysis (in the realm of stationarity) is almost exclusively concerned with the modelling and prediction of ``regular'' time series. As noted above VAR, VARMA and state space models are all PND \citep[see][]{deistler2022modelle}. If we think of $(y_{it})$ as a process of (stationarity transformed) economic data, we would not expect that any part of the variation of the process could be explained in the far distant future given information up to now. The same should hold true for the common component which explains a large part of the variation of the observed process. Even more so, if we interpret the idiosyncratic component of the GDFM as measurement errors \cite{lippi2021validating, forni2025common}.

Therefore, we impose the following assumption:
\begin{assumption}[Purely Non-Deterministic Dynamic Common Component]\label{A: dynamic CC is PND}
The dynamic common component $(\chi_{it})$ of the GDFM is PND with orthonormal white noise innovation $(\varepsilon_t)$ (of dimension $q$) and innovation-form 
\begin{align*}
\chi_{it} =\ubar k_i (L)\varepsilon_t = \sum_{j = 0}^\infty K_i (j) \varepsilon_{t-j}.
\end{align*}
\end{assumption}
This assumption resolves only half of the one-sidedness issue as it is not clear whether $\varepsilon_t$ has a representation in terms of current and past $y_{it}$'s.
%
%
%
%
%
\section{One-Sidedness of the Common Shocks in the Observed Process}\label{sec: one-sidedness of the common shocks}
Consider the sequence of $1\times q$ row transfer-functions $(k_i: i \in \mathbb N)$. For the proof of Theorem \ref{thm: causal subordination}, the main result of the paper, we rely on the following key property: By reordering and stacking, we can construct a sequence of blocks $(k^{(j)}: j\in \mathbb N)$ of dimension $q_j\geq q$ such that the left-inverses $(k^{(j)})^{\dag}$  (here ``$\dag$'' denotes the generalised inverse) are causal and absolutely summable filters. First, we show that we can build infinitely many full-rank blocks by reordering the sequence. Second, we argue why it is reasonable to assume that we can stack the blocks in such a way that each block is also \textit{causally} invertible, i.e. there are no zeros \textit{inside} the unit circle. Third, absolute summability requires that the blocks have no zeros \textit{on} the unit circle - a condition we will relax in the discussion following the proof of Theorem \ref{thm: causal subordination}.

\begin{theorem}\label{thm: reordering to full rank blocks}
Under Assumptions A\ref{A: stat}-A\ref{A: dynamic CC is PND}, there exists a reordering $\left(k_{i_l}: l \in \mathbb N\right)$ of the sequence $(k_{i}: i \in \mathbb N)$ such that all consecutive $q \times q$ blocks $(k^{(j)})$ of $\left(k_{i_l}: l \in \mathbb N\right)$ have full rank $q$ almost everywhere on $[-\pi, \pi]$.
\end{theorem}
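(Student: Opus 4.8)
The plan is to construct the reordering greedily, assembling one full-rank $q\times q$ block at a time, with the whole argument resting on a \emph{tail-rank lemma}: after deleting any finite set of coordinates, the remaining (still infinite) sub-process has spectral density of rank $q$ almost everywhere. I first record the analytic dichotomy that underlies everything. By Assumption~A\ref{A: dynamic CC is PND} each row is the boundary value $k_i(\theta)=\ubar k_i(e^{-\iota\theta})$ of a map analytic in $D$, so for any finite tuple $k_{i_1},\dots,k_{i_r}$ every $r\times r$ minor of the stacked matrix is analytic in $D$ and therefore vanishes either a.e. or nowhere (a.e.) on $[-\pi,\pi]$; each such stack thus has a well-defined \emph{generic rank}, attained a.e., and in particular a single row is either $\equiv 0$ or nonzero a.e. Throughout I assume every coordinate carries a nontrivial common component, $k_i\not\equiv 0$ (equivalently $\chi_{it}\not\equiv 0$); a ``dead'' coordinate could never occupy a full-rank block and would simply be discarded.

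\emph{The tail-rank lemma is the main obstacle, and is exactly where Assumption~A\ref{A: q-DFS struct}(i) enters} — rank-$q$-ness of the whole process is not sufficient, since a tail with all its spectral mass in fewer than $q$ directions would violate no rank hypothesis yet still block the construction. Write $M_S(\theta)=\sum_{i\in S}k_i^*(\theta)k_i(\theta)$ for finite $S\subset\mathbb N$; then $\mu_q(f_\chi^n)$ coincides with the smallest eigenvalue of the $q\times q$ Gram matrix $M_{\{1,\dots,n\}}$, and $\mu_q(M_S)$ is monotone increasing as $S$ grows (Weyl). Hence Assumption~A\ref{A: q-DFS struct}(i) reads $\sup_{S}\mu_q(M_S)=\infty$ a.e. Suppose some tail $\mathbb N\setminus U$ ($U$ finite) had rank $<q$ a.e.; then by the dichotomy every truncation $M_{(\mathbb N\setminus U)\cap\{1,\dots,n\}}$ is singular and, for $n\ge\max U$, evaluating the Rayleigh quotient of $M_{\{1,\dots,n\}}=M_U+M_{(\mathbb N\setminus U)\cap\{1,\dots,n\}}$ at a unit kernel vector of the second summand gives $\mu_q(M_{\{1,\dots,n\}})(\theta)\le \mu_1(M_U)(\theta)<\infty$ uniformly in $n$, for a.e.\ $\theta$. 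This contradicts the divergence, so every such tail has rank exactly $q$ a.e.

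\emph{One-step extension.} If $k_{i_1},\dots,k_{i_r}$ ($r<q$) has rank $r$ a.e.\ and its indices lie in a finite used set $U$, then some index of the tail $\mathbb N\setminus U$ raises the rank to $r+1$ a.e.: otherwise, by the dichotomy, every tail row lies in the $r$-dimensional span of the partial block a.e., forcing the tail to have rank $\le r<q$ a.e.\ and contradicting the tail-rank lemma. This is the same minor argument used for Fact~\ref{fact: subselection}, now applied inductively inside an arbitrary cofinite tail.

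\emph{Greedy assembly and coverage.} I build the blocks in order. To open block $j$, take the smallest still-unplaced coordinate $i^\star_j$ — which has rank $1$ a.e.\ by nondegeneracy — and complete it to a full-rank $q$-block by $q-1$ applications of the extension step, each time drawing the new index from the unplaced tail (possible, since that tail is cofinite and hence of rank $q$ a.e.). Because $i^\star_j$ is the smallest unplaced index at stage $j$ it strictly increases in $j$, so $i^\star_j\to\infty$ and every coordinate is placed after finitely many blocks. The resulting map $l\mapsto i_l$ is therefore a bijection of $\mathbb N$ whose consecutive $q\times q$ blocks all have full rank $q$ a.e., which is the assertion. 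The only genuinely delicate point is the tail-rank lemma; the extension and bookkeeping steps are then routine.
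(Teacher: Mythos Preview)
Your proof is correct and follows essentially the same route as the paper's: both arguments construct the reordering greedily and rule out obstruction by showing that if any cofinite tail had generic rank $<q$, then $\mu_q\big((k^n)^*k^n\big)$ would be bounded by the largest eigenvalue of the finite ``used'' part, contradicting Assumption~A\ref{A: q-DFS struct}(i). Your presentation is somewhat more modular --- isolating the tail-rank lemma, using the Rayleigh quotient directly rather than the Lancaster eigenvalue inequality plus RQ-decomposition, and explicitly verifying that the smallest-unplaced-index rule yields a bijection --- but the mathematical content is the same.
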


\begin{remark}\label{rem: no zero rows}
By Assumption A\ref{A: q-DFS struct}, we know that 
\begin{align}
    \mu_q\left(f_\chi^n \right) = \mu_q \bigg( \left(k^n\right)^* k^n \bigg) \to \infty \quad \mbox{almost everywhere on} \ [-\pi, \pi], \label{eq: divergence k^n* k^n}  
\end{align}
with $k^n = (k_1', ...., k_n')'$.

Since by Assumption A\ref{A: dynamic CC is PND}, $\ubar k_i$ is analytic in the open unit disc, it follows that either $k_i(\theta) = 0$ or $k_i(\theta) \neq 0$ almost everywhere on $[-\pi, \pi]$. If $k_i(\theta) = 0$ almost everywhere, then $\chi_{it} = 0$ and therefore $\chi_{it} \in \mathbb H_t(y)$. By (\ref{eq: divergence k^n* k^n}) the number of non-zero rows $k_i$ must be infinite. Therefore Theorem \ref{thm: reordering to full rank blocks} holds if and only if it holds after removing all rows with $k_i = 0$ almost everywhere on $[-\pi, \pi]$.
\end{remark}
\begin{proof}[Proof of Theorem \ref{thm: reordering to full rank blocks}]

Concurring with Remark \ref{rem: no zero rows}, we assume that $(k_i: i \in \mathbb N)$ has no zero rows and prove the statement by constructing the reordering using induction. By Assumption A\ref{A: dynamic CC is PND} and Fact \ref{fact: subselection} and equation (\ref{eq: divergence k^n* k^n}), we can build the first $q\times q$ block, having full rank almost everywhere on $[-\pi, \pi]$ by selecting the first linearly independent rows $i_1, ..., i_q$ of the sequence of row transfer-functions $(k_i: n \in \mathbb N)$, i.e., set $k^{(1)} = (k_{i_1}', ... k_{i_q}')'$. 

Now look at the block $j+1$: We use the next $k_i$ available in order, as the first row of $k^{(j+1)}$, i.e., $k_{i_{jq + 1}}$. Suppose we cannot find $k_i$ with $i \in \mathbb N \setminus \{ i_l: l \leq jq +1 \}$ linearly independent of $k_{i_{jq + 1}}$. Consequently, having built already $j$ blocks of rank $q$, all subsequent blocks that we can build from any reordering are of rank $1$ almost everywhere on $[-\pi, \pi]$. In general, for $\bar q < q$, suppose we cannot find rows $k_{i_{jq + \bar q + 1}}, ..., k_{i_{jq + q}}$ linearly independent of $k_{i_{jq+1}}, ...,k_{i_{jq + \bar q}}$, then all consecutive blocks that we can obtain from any reordering have at most rank $\bar q$.

For all $m = j+1, j+2, ...$ by the RQ-decomposition we can factorise $k^{(m)} = R^{(m)} (\theta) Q^{(m)}$, where $Q^{(m)} \in \mathbb C^{q \times q}$ is orthonormal and $R^{(m)}(\theta)$ is lower triangular $q\times q$ filter which is analytic in the open unit disc. For $n \geq i_{j}$ and without loss of generality that $n$ is a multiple of $q$, the reordered sequence looks like
{
\small
\begin{align*}
     (k^n)^* k^n &= \sum_{l=1}^n k_{i_l}^* k_{i_l} \\[-1em]
     &=\begin{bmatrix}
        \left(k^{(1)}\right)^* & \cdots & \left(k^{(j)}\right)^*
    \end{bmatrix} 
    \begin{bmatrix}
         k^{(1)} \\
        \vdots\\ 
        k^{(j)} 
    \end{bmatrix}
    + 
    \begin{bmatrix}
        \left(R^{(j+1)}\right)^* & \cdots & \left(R^{(J)}\right)^*
    \end{bmatrix}
      \begin{bmatrix}
        R^{(j+1)} \\
        \vdots \\
        R^{(J)}
    \end{bmatrix} \\
    &= \begin{bmatrix}
        \left(k^{(1)}\right)^* & \cdots & \left(k^{(j)}\right)^*
    \end{bmatrix} 
    \begin{bmatrix}
         k^{(1)} \\
        \vdots\\ 
        k^{(j)} 
    \end{bmatrix} + \begin{pmatrix}
        \times & 0 \\
           0    & 0
    \end{pmatrix}
    = A + B^n, \ \mbox{say,} 
\end{align*}
}
where $\times$ is a placeholder. By the structure of the reordering, there are $q - \bar q$ zero end columns/rows in $B^n$ for all $n \geq jq$ where $A$ remains unchanged.

Now by \citet[][theorem 1, p.301]{lancaster1985theory}, we have
\begin{align*}
    \mu_q\bigg(\left(k^n\right)^* k^n\bigg) &= \mu_q(A + B^n) \\[-0.8em]
    &\leq \mu_1(A) + \mu_q (B^n) \\
    &= \mu_1(A)  < \infty \ \mbox{for all } n \in \mathbb N \ \mbox{almost everywhere on} \ [-\pi, \pi] . 
\end{align*}
This also implies that for \textit{any} reordering the $q$-th eigenvalue of the resulting inner product of the transfer-function as in equation (\ref{eq: divergence k^n* k^n}) is bounded by $\mu_1(A)$. This is a contradiction and completes the induction step and the proof.  
\end{proof}
Theorem \ref{thm: reordering to full rank blocks} shows that we can extract infinitely many sub-blocks of of dimension $q$ from $(\chi_{it})$, each with full-rank spectrum almost everywhere on $[-\pi, \pi]$. This provides us with an unlimited supply of such ``variable stacks,'' each capable of capturing the signal from all $q$ common shocks.

Next, we impose the following assumption:
\begin{assumption}[Uniformly Strictly Minimum Phase after Blocking]\label{A: strict miniphase}
    There exists a sequence of blocks $(k^{(j)}: j \in \mathbb N)$ of dimension $q_j\times q$ with $q_j\geq q$ constructed from $(k_i: i \in \mathbb N)$ (by reordering/elimination and appropriate blocking), such that $\sigma_q\left(k^{(j)}\right) > \delta > 0$ almost everywhere on $[-\pi, \pi]$ for all $j \in \mathbb N$. 
\end{assumption}
This is similar to the commonly employed assumption in linear systems theory that the transfer-function is strictly minimum-phase, i.e., has no zeros on the unit circle as assumed in \cite{deistler2010generalized}, section 2.3 or in \cite{forni2017dynamic}, Assumption 7. 

Some further comments in order: Consider a sequence of full rank $q\times q$ blocks as in Theorem \ref{thm: reordering to full rank blocks}. First let us remark that the lack of causal invertibility of a full rank transfer-function block is the non-standard case. So we could simply assume that all consecutive $q\times q$ blocks (or a subsequence thereof) are causally invertible. On the other hand this excludes examples \eqref{eq: exmp invertible but not infinitely often} and \eqref{eq: exmp zeros inside but not in 2 dimensions}. However, as demonstrated in the discussion of example \eqref{eq: exmp zeros inside but not in 2 dimensions}, zeros can be removed by adding additional linearly independent rows to a block. For instance, we might be able to paste blocks together, increasing to dimension to $q_j\times q$ with $q_j \geq q$, while $q_j$ can be arbitrarily large, such that all zeros inside the unit circle vanish, so Assumption A\ref{A: strict miniphase} holds for \eqref{eq: exmp zeros inside but not in 2 dimensions}. Furthermore, note that example \eqref{eq: exmp not the Wold rep} is also covered by A\ref{A: strict miniphase} with $(\eta_t)$ as innovation instead of $(\varepsilon_t)$ (see the related discussion). Only cases like \eqref{eq: exmp invertible but not infinitely often} are ruled out by A\ref{A: strict miniphase}, where the innovation is determined by a finite number of transfer-function rows while all other rows have the same zeros inside the unit circle which cannot be removed by stacking. Even then, causal invertibility holds if we would ignore those rows, i.e. row $i = 1$ in \eqref{eq: exmp invertible but not infinitely often} which brings us back to the case of \eqref{eq: exmp not the Wold rep}.

Next, Assumption A\ref{A: strict miniphase} requires the stacks to be not only minimum phase (no zeros inside the unit circle) but \textit{strictly} minimum phase (no zeros inside and on the unit circle) with a uniform bound $\delta$. This excludes e.g. $\ubar k_i(L) = 1-L, i\in \mathbb N$ or $\ubar k_i(L) = (1 - (1-\frac{1}{i})L), i\in \mathbb N$. We will show how to incorporate those cases after the proof of Theorem \ref{thm: causal subordination}. On the other hand, we may argue that the uniformly strict minimum phase property with a global bound $\delta$ can be achieved by eliminating blocks or extending their size as described above. 

In summary, we conclude that Assumption A\ref{A: strict miniphase} is in fact a very mild restriction, excluding only rather contrived edge cases.
\begin{theorem}\label{thm: causal subordination}
Suppose A\ref{A: stat}-A\ref{A: strict miniphase} hold for $(y_{it})$, then the innovations $(\varepsilon_t)$ of $(\chi_{it})$ are causally subordinated to the observed variables $(y_{it})$, i.e., $\varepsilon_t \in \mathbb H_t(y)$. 
\end{theorem}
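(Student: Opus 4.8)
The plan is to make rigorous the static-factor heuristic sketched around equation \eqref{eq: phi preview}: I turn the one-block identity $\chi^{(j)}_t = \ubar k^{(j)}(L)\varepsilon_t$ into a noisy measurement of the common shocks by applying a \emph{causal} left-inverse, and then recover $\varepsilon_t$ exactly by averaging out the resulting noise across the infinitely many blocks supplied by Theorem \ref{thm: reordering to full rank blocks}, exploiting that this noise is statically idiosyncratic.

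First I would fix the blocks. By Theorem \ref{thm: reordering to full rank blocks}, Assumption A\ref{A: dynamic CC is PND} (which provides the innovation form $\chi^{(j)}_t = \ubar k^{(j)}(L)\varepsilon_t$) and Assumption A\ref{A: strict miniphase}, there is a sequence of blocks $\ubar k^{(j)}$, $j\in\mathbb N$, of dimension $q_j\times q$, of full column rank with $\sigma_q(k^{(j)}(\theta))>\delta$ a.e.\ on $[-\pi,\pi]$. Strict minimum phase (no zeros inside or on the unit circle, uniformly through $\delta$) is precisely the condition under which each block admits a causal, absolutely summable left-inverse $(\ubar k^{(j)})^{\dag}(L)$ with $(\ubar k^{(j)})^{\dag}(z)\ubar k^{(j)}(z)=I_q$ on $D$, and, crucially, the uniform lower bound $\delta$ yields a uniform upper bound $\esssup_\theta\|(k^{(j)})^{\dag}(\theta)\|\le C(\delta)$ independent of $j$. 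Writing $y^{(j)}_t$ for the sub-vector of $(y_{it})$ carrying block $j$ and $\xi^{(j)}_t$ for its idiosyncratic part, I set $\phi^{(j)}_t:=(\ubar k^{(j)})^{\dag}(L)y^{(j)}_t=\varepsilon_t+\zeta^{(j)}_t$ with $\zeta^{(j)}_t:=(\ubar k^{(j)})^{\dag}(L)\xi^{(j)}_t$. Causality and absolute summability of the filter, together with the fact that $y^{(j)}_s$ has coordinates in $\mathbb H_t(y)$ for $s\le t$, give $\phi^{(j)}_t\in\mathbb H_t(y)$ for every $j$.

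Second, I would show the residuals are statically idiosyncratic, i.e.\ that $\mu_1(\V[\zeta^{(m)}_t])$ stays bounded in $m$, where $\zeta^{(m)}_t:=((\zeta^{(1)}_t)',\dots,(\zeta^{(m)}_t)')'$. Collecting the left-inverses into the block-diagonal filter $\ubar M^{(m)}(L)$ gives $\zeta^{(m)}_t=\ubar M^{(m)}(L)\xi^{(m)}_t$ with spectral density $M^{(m)}(\theta)f_\xi^{(m)}(\theta)M^{(m)*}(\theta)$. Since the spectral norm of a block-diagonal matrix is the maximum of the block norms, $\mu_1$ of this density is at most $C(\delta)^2\mu_1(f_\xi^{(m)}(\theta))$; and because $f_\xi^{(m)}$ is a principal submatrix of some $f_\xi^N$, Cauchy interlacing combined with dynamic idiosyncraticness (A\ref{A: q-DFS struct}, i.e.\ $\esssup_\theta\sup_n\mu_1(f_\xi^n(\theta))<\infty$) bounds $\mu_1(f_\xi^{(m)}(\theta))$ uniformly. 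Integrating, $\mu_1(\V[\zeta^{(m)}_t])\le\frac{1}{2\pi}\int_{-\pi}^\pi\mu_1\big(M^{(m)}f_\xi^{(m)}M^{(m)*}\big)\,d\theta$ is bounded by a constant independent of $m$. Finally I average: with $S=(I_q,\dots,I_q)$ of size $q\times mq$, the mean $\bar\phi^{(m)}_t:=\frac1m\sum_{j=1}^m\phi^{(j)}_t=\varepsilon_t+\frac1m S\zeta^{(m)}_t$ satisfies $\mu_1(\V[\tfrac1m S\zeta^{(m)}_t])\le\frac{\|S\|^2}{m^2}\mu_1(\V[\zeta^{(m)}_t])=\frac1m\mu_1(\V[\zeta^{(m)}_t])\to0$, so $\bar\phi^{(m)}_t\to\varepsilon_t$ in mean square. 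As each $\bar\phi^{(m)}_t$ is a finite linear combination of the $\phi^{(j)}_t\in\mathbb H_t(y)$ and $\mathbb H_t(y)$ is closed, the limit $\varepsilon_t$ lies in $\mathbb H_t(y)$, which is the claim.

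The main obstacle is the first step, specifically the passage from the spectral lower bound in A\ref{A: strict miniphase} to a causal, absolutely summable left-inverse carrying a norm bound uniform in $j$. For square blocks this is elementary, since analyticity of $\ubar k^{(j)}(z)^{-1}$ on a neighbourhood of the closed disc forces geometric decay of its coefficients; but for genuinely tall blocks ($q_j>q$) the natural Moore--Penrose inverse is anti-causal, so one must instead produce the causal left-inverse via inner--outer factorisation or the matrix corona theorem and verify that the $\delta$-bound controls its $H^\infty$-norm uniformly — and, if the dimensions $q_j$ are allowed to grow, that the corona constant does not degrade with dimension. A secondary point requiring care is the interchange of $\mu_1$ with the spectral integral and the reduction of $f_\xi^{(m)}$ to a principal submatrix of a genuine $f_\xi^N$ under the reordering.
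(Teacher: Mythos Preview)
Your proposal is correct and tracks the paper's argument closely: apply a causal left-inverse blockwise to obtain $\phi^{(j)}_t=\varepsilon_t+\zeta^{(j)}_t$, bound $\mu_1$ of the stacked $\zeta$-covariance uniformly via the $\delta$-bound together with dynamic idiosyncraticness of $(\xi_{it})$, then aggregate to isolate $\varepsilon_t\in\mathbb H_t(y)$. The only substantive difference is the aggregation step: you recover $\varepsilon_t$ by the plain cross-block mean $\bar\phi^{(m)}_t$, which works precisely because the loadings on $\varepsilon_t$ are stacked identities, whereas the paper verifies that $(\varphi_{it})$ is a $q$-static factor sequence and invokes Theorem \ref{thm: charact r-SFS} (static principal components). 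Your route is more elementary and avoids the appendix machinery; the paper's formulation would cover non-identity loadings but that generality is not needed here. The bounding of $\mu_1(\Gamma_{e^\varphi})$ is carried out in the paper exactly as you sketch, via $\mu_1\!\left(\int f\right)\le\int\mu_1(f)$ and the block-diagonal norm bound $\sigma_q(k^{(j)})^{-1}\le\delta^{-1}$.

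On the obstacle you flag: you are in fact more careful than the paper. The paper simply asserts that under A\ref{A: strict miniphase} ``all left-inverse transfer-functions $(k^{(j)})^{\dag}$ are causal as well'' and then computes with the frequency-wise SVD (i.e.\ the Moore--Penrose inverse); your observation that the Moore--Penrose inverse of a genuinely tall causal filter is generically anti-causal is correct, and the passage from $\sigma_q(k^{(j)})>\delta$ on the unit circle to a \emph{causal} $H^\infty$ left-inverse with norm controlled solely by $\delta$ does require an inner--outer or matrix corona/Leech-type argument that neither proof writes out. For square blocks $q_j=q$---which Theorem \ref{thm: reordering to full rank blocks} already supplies in abundance---this subtlety disappears and both arguments go through as written. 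The secondary points you raise (subadditivity of $\mu_1$ under the spectral integral, and $f_\xi^{(m)}$ being a principal submatrix after reordering) are routine and are handled identically in the paper.
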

We apply remark \ref{rem: no zero rows} also for Theorem \ref{thm: causal subordination}. Trivially, since $(k_i: i \in \mathbb N)$ are also causal, Theorem \ref{thm: causal subordination} directly implies that the dynamic common component is causally subordinated to the observed output, i.e. $\chi_{it} \in \mathbb H_t(y)$ for all $i\in \mathbb N, t \in \mathbb Z$. Furthermore, if we had to eliminate rows to satisfy Assumption A\ref{A: strict miniphase}, then after recovering the common shocks $(\varepsilon_t)$ causally from $(y_{it})$, we can also reconstruct the common component of the eliminated variables by projection $\chi_{it} = \proj(y_{it} \mid \mathbb H_t(\varepsilon_t))$ \citep[see][]{forni2001generalized, gersing2023reconciling}.

\begin{proof}[Proof of Theorem \ref{thm: causal subordination}]
Suppose $(k^{(j)} : i \in \mathbb N)$ is such that Assumption A\ref{A: strict miniphase} is satisfied. Suppose $\sum_{j = 1}^J q_j = n$ without loss of generality.
{
\begin{align*}
\chi_t^n =
\begin{pmatrix}
\chi_t^{(1)} \\ \chi_t^{(2)} \\ \vdots \\ \chi_t^{(J)}
\end{pmatrix} 
= \begin{pmatrix} 
\ubar k^{(1)}(L) \\
\vdots \\
\ubar k^{(J)}(L) 
\end{pmatrix} \varepsilon_t & =  
\begin{pmatrix}
 \ubar k^{(1)}(L) & &\\
& \ddots & \\
& & \ubar k^{(J)}(L)
\end{pmatrix} 
\begin{pmatrix}
 I_q \\
 \vdots \\
 I_q
\end{pmatrix} \varepsilon_t .
\end{align*}
}
By Assumption A\ref{A: strict miniphase}, we know that all left-inverse transfer-functions $(k^{(j)})^{\dag}, j = 1, ..., J$ are causal as well. Next we show that $(\varphi_{it})$ in 
{
%
\begin{align}
\varphi_t^{qJ} &:= \begin{pmatrix}
   \left(\ubar k^{(1)}\right)^{\dag}(L) & & \\
    & \ddots & \\
    & &  \left(\ubar k^{(J)}\right)^{\dag}(L)
\end{pmatrix} 
\begin{pmatrix}
y_t^{(1)} \\ \vdots \\ y_t^{(J)}
\end{pmatrix} \nonumber \\
&=
\begin{pmatrix}
 I_q \\
 \vdots \\
 I_q
\end{pmatrix} \varepsilon_t +
\begin{pmatrix}
  \left(\ubar k^{(1)}\right)^{\dag}(L) & &\\
& \ddots & \\
& &  \left(\ubar k^{(J)}\right)^{\dag}(L)
\end{pmatrix}
\begin{pmatrix}
\xi_t^{(1)} \\ \vdots \\ \xi_t^{(J)}
\end{pmatrix} 
= C_t^{\varphi, qJ} + e_t^{\varphi, qJ} , \mbox{say,} \label{eq: varphi and averaging}  \  
\end{align}
}
has a static factor structure (see definition \ref{def: r-SFS}). 
Then $\varepsilon_t$ can be recovered from static aggregation/ via static principal components applied to $(\varphi_{it})$ by Theorem \ref{thm: charact r-SFS}.2.

Firstly, $q$ eigenvalues of $\Gamma_{C^\varphi}^{qJ} = \E \left[C_t^{\varphi, qJ} C_t^{\varphi, qJ'}\right]$ diverge for $J(n) \to \infty$ as $n\to \infty$, so A\ref{A: r-SFS struct}(i) holds. We are left to show that the first eigenvalue of $\Gamma_{e^\varphi}^n = \E\left[e_t^{\varphi, qJ}e_t^{\varphi, qJ'}\right]$ is bounded in $qJ$, i.e., A\ref{A: r-SFS struct}(ii) holds. Let $U_j \Sigma_j V_j^* = k^{(j)}(\theta)$ be the singular value decomposition while $U_j$ is $n\times q$ with orthonormal columns, $\Sigma_j = \diag(\sigma_1(k^{(j)}), ..., \sigma_q(k^{(j)}))$ is the diagonal matrix of singular values of $k^{(j)}$ and $V_j$ is a $q\times q$ unitary matrix, where we suppressed the dependence on $\theta$ in the notation on the LHS. Let $f_\xi^n(\theta) = P^* M P$ be the eigen-decomposition of $f_\xi^n$ with orthonormal eigenvectors being the rows of $P$ and eigenvalues in the diagonal matrix $M$ (omitting dependence on $n$). Then 
\begin{align*}
   f_{e^\varphi}^{qJ}(\theta) =  \bigoplus_{j = 1}^J V_j \underbrace{\bigoplus_{j = 1}^J \Sigma_j^{-1} \bigoplus_{j = 1}^J U_j^* P^* M P \bigoplus_{j = 1}^J U_j \Sigma_j^{-1}}_{B^J(\theta)} \bigoplus_{j = 1}^J V_j^*, 
\end{align*}
where we used $\bigoplus_{j = 1}^J A_j$ to denote the block diagonal matrix with the square matrices $A_j$ for $1\leq j \leq J$ on the main diagonal block. The largest eigenvalue of $f_{e^\varphi}^{qJ}(\theta)$ is equal to the largest eigenvalue of $B^J(\theta)$. Therefore by Jensen's inequality and Assumption A\ref{A: strict miniphase} we have
\begin{align*}
\mu_1\left(\Gamma_{e^\varphi}^{qJ} \right) &= \mu_1\left(\int_{-\pi}^\pi f_{e^{\varphi}}^{qJ}\right) \leq \int_{-\pi}^\pi \mu_1\left(f_{e^{\varphi}}^{qJ}\right)  \\[0.8em]
&\leq 2\pi \sup_{J\in \mathbb N}  \esssup_{\theta \in [-\pi, \pi]} \mu_1 \left(f_{e^{\varphi}}^{qJ}\right) \\
& \leq 2 \pi \sup_{J\in \mathbb N} \l\{\esssup_{\theta \in [-\pi, \pi]}  \mu_1  \left(f_{\xi}^{n}\right) \times \sup_{1\leq j \leq J} \esssup_{\theta \in [-\pi, \pi]} \sigma_q\l(k^{(j)}\r)^{-2} \r\} \\
& \leq 2 \pi \sup_{J\in \mathbb N} \esssup_{\theta \in [-\pi, \pi]}  \mu_1  \left(f_{\xi}^{n}\right) \delta^{-2}  < \infty.  
\end{align*}
This completes the proof.
\end{proof}
Note that we employed the uniform strict minimum phase property of Assumption A\ref{A: strict miniphase} in the proof of Theorem \ref{thm: causal subordination} to ensure that the filtered idiosyncratic blocks $(k^{(j)})^{\dag}(L) \xi_t^{(j)} = e_t^{\varphi, (j)}$ have finite variance. This condition prevents the highly-nongeneric situation in which zeros appear on the unit circle \textit{in almost all} blocks, irrespective of how they are stacked. Nevertheless, in what follows, we show that absolute summability of the left-inverse transfer function blocks is not in fact required causal subordination: 

For instance, consider the model
\begin{align*}
y_{it} = \chi_{it} + \xi_{it} =  (1-L) \varepsilon_t + \xi_{it} = \zeta_t + \xi_{it}
\end{align*}
where $(\xi_{it})$ is dynamically idiosyncratic. Clearly, the coefficients of $(1-L)^{1}$ are not absolutely summable. Still, we can recover $(\varepsilon_t)$ one-sided in the $(y_{it})$: The cross-sectional average $\bar y_t^n = n^{-1}\sum_{i = 1}^n y_{it}$ is a static aggregation of $(y_{it})$ and therefore $\bar y_t^n \to \zeta_t$ converges in mean square \citep[see][for details and the appendix for a short summary]{gersing2023reconciling, gersing2024distributed}. Furthermore $(\zeta_t)$ is PND and by the Wold representation Theorem, the innovations are recovered from the infinite past $\mathbb H_t(\zeta) \subset \mathbb H_t(y)$: It suffices that the inverse of $(1-L)$ exist for the input $(\zeta_t)$, since 
\begin{align*}
    &f_\zeta(\theta) = \left(1-e^{-\iota \theta }\right) f_\varepsilon(\theta) \left(1-e^{\iota \theta }\right) = \left(1-e^{-\iota \theta }\right) \frac{1}{2\pi} \left(1-e^{\iota \theta }\right)  \\
    &\int_{-\pi}^\pi  \left(1-e^{-\iota \theta }\right)^{-1} f_\zeta(\theta) \left(1-e^{\iota \theta }\right)^{-1} = \int_{-\pi}^\pi \frac{1}{2\pi} = 1 < \infty, 
\end{align*}
we know that $(1-e^{-\iota \theta})^{-1}$ is an element of the frequency domain of $(\zeta_t)$ and therefore has an inverse also in the time domain \citep[see also][section 5]{anderson2016multivariate}. 

More generally, we may employ this procedure by factoring out the zeros from the analytic functions $(\ubar k^{(j)}: j\in \mathbb N)$. Let $\ubar k^{(j)} = \ubar g_j \ubar h_j$, where $\ubar g_j$ is a polynomial defined by the zeros of $\ubar k^{(j)}$ which are on the unit circle. Recall that the zeros of an analytic function are isolated, so if $z_0$ is a zero, we have $\ubar h_j(z)\neq 0$ in a neighbourhood around $z_0$. Furthermore the degree of a zero can be only finite or the function is zero everywhere. Thus there can be only finitely many different zeros on the unit circle, since the unit circle is a compact set. Write $\ubar g_j(z) = \Pi_{k_j = 1}^{M_j} (z - z_{j k_j})^{m_{k_j}}$ with $|z_{j k_j}| = 1$ for $k_j = 1, ..., M_j$ and $j \in \mathbb N$. It follows that $g_j^{-1}(\theta) k^{(j)}(\theta) \neq 0$ almost everywhere on $[-\pi, \pi]$, where $g_j(\theta) := \ubar g_j (e^{-i\theta})$. 

Consequently setting $\ubar k^{(j)}(L) = \ubar g_j(L) \ubar h^{(j)}(L)$, we have 

%
%
\begin{align}
\varphi_t^n = \begin{pmatrix}
 \ubar g_1(L)  I_q \\
 \vdots \\
 \ubar g_{J}(L) I_q
\end{pmatrix} \varepsilon_t +
\begin{pmatrix}
  \left(\ubar h^{(1)}\right)^{\dag}(L) & &\\
& \ddots & \\
& &  \left(\ubar h^{(J)}\right)^{\dag}(L)
\end{pmatrix}
\begin{pmatrix}
\xi_t^{(1)} \\ \vdots \\ \xi_t^{(J)}
\end{pmatrix} = C_t^{\varphi, n} + e_t^{\varphi, n},  \label{eq: varphi and averaging more general} 
\end{align}
with $\rank \ubar h_j(L) =  q$ almost everywhere on $[-\pi, \pi]$. Now, if Assumption A\ref{A: strict miniphase} holds for $(h^{(j)})$ instead of $(k^{(j)})$, with the same arguments as above, we obtain a one-sided representation of $(\varepsilon_t)$, if there exists a static averaging sequence $(\hat c_i^{(n)}:(i, n) \in \mathbb N \times \mathbb N)$ (see definition \ref{def: SAS}) such that
\begin{align}
    \zeta_t^n = \sum_{i = 1}^n c_i^{(n)}\varphi_{it} = \sum_{i = 1}^n c_i^{(n)}(C_{it}^{\varphi} + e_{it}^{\varphi}) \to \zeta_t \label{eq: zeta aggregate for retrieval}
\end{align}
converges in mean square to a PND process, say $\zeta_t$, with innovations $(\varepsilon_t)$, with $\varphi_{it}, C_{it}^{\varphi}, e_{it}^\varphi$ from equation (\ref{eq: varphi and averaging more general}).

Summing up, even in the highly non-generic case where zeros lie on the unit circle in almost all transfer-function blocks, regardless of how they are stacked, it remains possible to retrieve the common innovations $(\varepsilon_t)$ causally from $(y_{it})$ by factoring out the zeros, inverting the invertible part, and then aggregating to recover the common shocks. This requires the existence of an aggregate $(\zeta_t)$ as in \eqref{eq: zeta aggregate for retrieval} with innovations $(\varepsilon_t)$. As formally proving this in complete generality may be challenging, we assume such pathological cases are of limited practical relevance and do not pursue them further here.

The inclusion of edge cases relating zeros inside and on the unit circle in the $q\times q$ transfer-function blocks, while mostly of theoretical interest, highlight the generality of the GDFM’s one-sidedness rather than suggesting a practical estimation method. In practice, achieving the required structure by reordering and stacking variables is non-trivial, but we may assume that robust approaches - like using blocks of size $q+1$ or $q+2$ as in \cite{forni2015dynamic, forni2017dynamic, barigozzi2024inferential} — are sufficient. Alternatively, recent methods \cite{gersing2024distributed, gersing2024weak} estimate the dynamic common component by projecting onto current and past factors (imposing additional assumptions) extracted via static principal components, avoiding concerns about non-fundamentalness or unit-circle zeros — provided a one-sided representation exists.

\section{Conclusion}\label{sec: conclusion}
We conclude by highlighting that causal subordination is a distinctive feature of the GDFM decomposition, rooted in the specific behaviour of its diverging spectral eigenvalues. Unlike general dynamic low-rank approximations via dynamic principal components, e.g. of dimension $q+h$, for $h>1$, the GDFM’s $q$ diverging spectral eigenvalues allow us to construct infinitely many full-rank transfer-function blocks. By causally inverting these blocks (potentially after factoring out spectral zeros) and then aggregating, we can recover the common innovations causally from the observed process. Consequently, provided that the dynamic common component is purely non-deterministic, it is itself causally subordinated to the observed process.
\section*{Acknowledgements}
{{I am deeply grateful to my PhD supervisor, Manfred Deistler, for his guidance and support. I would also like to thank the two anonymous referees for their valuable comments, particularly for highlighting the issue of causal invertibility, which significantly improved the paper. My thanks further go to Paul Eisenberg and Sylvia Frühwirth-Schnatter for their helpful suggestions. Financial support from the Austrian Central Bank under Anniversary Grant No.18287, the DOC Fellowship of the Austrian Academy of Sciences (ÖAW), and the University of Vienna is gratefully acknowledged.
}}
\section*{Data Availability Statement}
Data sharing is not applicable to this article as no datasets were generated or analysed during the current study.
\bibliographystyle{apalike} 
\bibliography{references.bib}
\appendix
\renewcommand{\theequation}{\thesection.\arabic{equation}}
%
%
%
\section{Background: Hilbert Space Theory for the Static Case}
Consider infinite dimensional \textit{constant} row-vectors of cross-sectional weights $\hat c  = (\hat c_1, \hat c_2, \cdots )\in \mathbb R^{1\times \infty}$ and write $\hat c^{\{n\}}:= (\hat c_1, \cdots \hat c_n)$ for the truncated vector. Denote by $\hat L_2^\infty (\Gamma_y^n)$ the set of vectors such that $\lim_{n\to \infty} \hat c^{\{n\}} \Gamma_y^n {\left(\hat c^{\{n\}}\right)}' < \infty$, where $\Gamma_y^n = \E\left[y_t^n y_t^{n'}\right]$ and by $\hat L_2^\infty(I)$ the set of vectors such that $\lim_{n\to \infty} \hat c^{\{n\}} {(\hat c^{\{n\}})}' < \infty$. Dynamic averaging sequences have been introduced by \cite{forni2001generalized}. In \cite{gersing2023reconciling} those are paralleled with static averaging sequences. For an alternative averaging scheme see \cite{barigozzi2024dynamic}.
\begin{definition}[Static Averaging Sequence (SAS)]\label{def: SAS}
Let $\hat c^{(k)} \in \hat L_2^\infty(I) \cap \hat L_2^\infty(\Gamma_y) \cap \mathbb R^{1\times \infty}$ for all $k \in \mathbb N$. The sequence $\left(\hat c^{(k)}:k \in \mathbb N\right)$ is called Static Averaging Sequence (SAS) if
\begin{align*}
    \lim_{k\to\infty} \hat c^{(k)}\left(\hat c^{(k)}\right)' = \lim_{k\to \infty} \norm{\hat c^{(k)}}_{\hat L_2^\infty(I)} =  0 .
\end{align*} 
\end{definition}
We denote the set of all static averaging sequences corresponding to $(y_{it})$ as 
{
\footnotesize
\begin{align*}
     \mathcal S(\Gamma_y) &:= \left \{ \left(\hat c^{(k)}\right) : \hat c^{(k)} \in \hat L_2^\infty(I) \cap \hat L_2^\infty(\Gamma_y) \cap \mathbb R^{1 \times \infty} \mbox{  } \forall k \in \mathbb N \mbox{  and  } \lim_{k\to \infty} \norm{\hat c^{(k)}}_{\hat L_2^\infty(I)} = 0 \right \} . 
\end{align*}
}
\begin{definition}[Statically Idiosyncratic]
A stochastic double sequence $(z_{it})$ is called statically idiosyncratic, if $\lim_{k\to\infty} \E\left[\hat c^{\{k\}} z_t^k\right]^2 = 0$ for all $(\hat c^{(k)}) \in \mathcal S (\Gamma_z)$ for all $t \in \mathbb Z$.
\end{definition}
The following Theorem has been stated for the dynamic case in \cite{forni2001generalized}, Theorem 1:
\begin{theorem}[Characterisation of Statically Idiosyncratic]\label{thm: charact stat idiosyncratic}
The following statements are equivalent: 
\begin{itemize}
    \item[(i)] A stationary stochastic double sequence $(z_{it})$ is statically idiosyncratic; 
    \item[(ii)] the first eigenvalue of the variance matrix is bounded, i.e.,
    \begin{align*}        
     \sup_{n \in \mathbb N} \mu_1(\Gamma_z^n) < \infty.
    \end{align*}
\end{itemize}
\end{theorem}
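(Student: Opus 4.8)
The plan is to prove the two implications separately via the variational (Rayleigh-quotient) characterisation of $\mu_1$, mirroring the dynamic argument of \cite{forni2001generalized}, Theorem 1. As a preliminary I would record that $\Gamma_z^n$ is the leading principal $n\times n$ submatrix of $\Gamma_z^{n+1}$, so Cauchy interlacing yields $\mu_1(\Gamma_z^{n})\le\mu_1(\Gamma_z^{n+1})$; hence $n\mapsto\mu_1(\Gamma_z^n)$ is non-decreasing and $\sup_n\mu_1(\Gamma_z^n)=\lim_n\mu_1(\Gamma_z^n)$. I would also fix the interpretation that, for $\hat c^{(k)}\in\hat L_2^\infty(\Gamma_z)$, the aggregate $\hat c^{(k)} z_t:=\mslim_{n\to\infty}\hat c^{(k),\{n\}} z_t^n$ exists and satisfies $\E[(\hat c^{(k)} z_t)^2]=\lim_{n\to\infty}\hat c^{(k),\{n\}}\Gamma_z^n(\hat c^{(k),\{n\}})'$, so that the idiosyncraticness condition is a statement about this well-defined limit.

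For (ii)$\Rightarrow$(i), set $\Lambda:=\sup_n\mu_1(\Gamma_z^n)<\infty$ and take any SAS $(\hat c^{(k)})$. By the Rayleigh bound, for every $n$,
\[
\hat c^{(k),\{n\}}\Gamma_z^n(\hat c^{(k),\{n\}})' \le \mu_1(\Gamma_z^n)\,\hat c^{(k),\{n\}}(\hat c^{(k),\{n\}})' \le \Lambda\,\norm{\hat c^{(k)}}_{\hat L_2^\infty(I)}^2 .
\]
Letting $n\to\infty$ gives $\E[(\hat c^{(k)} z_t)^2]\le\Lambda\norm{\hat c^{(k)}}_{\hat L_2^\infty(I)}^2$, and since $\norm{\hat c^{(k)}}_{\hat L_2^\infty(I)}\to 0$ by the definition of a SAS, the left-hand side tends to $0$. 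As the SAS was arbitrary, $(z_{it})$ is statically idiosyncratic.

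For (i)$\Rightarrow$(ii) I would argue the contrapositive: assume $\sup_n\mu_1(\Gamma_z^n)=\infty$ and exhibit a SAS violating idiosyncraticness. For each $n$ let $v^{(n)}\in\mathbb R^n$ be a unit-norm top eigenvector of $\Gamma_z^n$, and define $\hat c^{(n)}\in\mathbb R^{1\times\infty}$ to have leading block $\mu_1(\Gamma_z^n)^{-1/4}v^{(n)}$ and zeros thereafter. Because each $\hat c^{(n)}$ has finite support, the defining $\hat L_2^\infty$-limits stabilise, giving $\hat c^{(n)}(\hat c^{(n)})'=\norm{\hat c^{(n)}}_{\hat L_2^\infty(I)}^2=\mu_1(\Gamma_z^n)^{-1/2}\to 0$ and the finite quadratic form $\E[(\hat c^{(n)} z_t)^2]=\mu_1(\Gamma_z^n)^{-1/2}\mu_1(\Gamma_z^n)=\mu_1(\Gamma_z^n)^{1/2}$. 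Thus $(\hat c^{(n)})$ is a genuine SAS, yet $\E[(\hat c^{(n)} z_t)^2]=\mu_1(\Gamma_z^n)^{1/2}\to\infty$, so $(z_{it})$ is not statically idiosyncratic. Combining the two implications yields the equivalence.

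The routine part is the Rayleigh bound in (ii)$\Rightarrow$(i); the point needing care is the scaling in the construction for (i)$\Rightarrow$(ii): the exponent $-1/4$ is chosen precisely so that the weight norm vanishes ($\mu_1^{-1/2}\to 0$) while the aggregated variance still diverges ($\mu_1^{1/2}\to\infty$), which simultaneously certifies the SAS property and the failure of idiosyncraticness. The main conceptual obstacle I anticipate is the bookkeeping of the two norms $\hat L_2^\infty(I)$ and $\hat L_2^\infty(\Gamma_z)$ and confirming that the finite-support eigenvector weights lie in both spaces with the claimed limits; the monotonicity of $\mu_1(\Gamma_z^n)$ recorded at the outset is what lets me replace the $\sup$ by a limit and build the witnessing sequence index-by-index rather than along a subsequence.
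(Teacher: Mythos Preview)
Your proposal is correct and follows essentially the same approach the paper indicates: the paper does not give a self-contained proof but states that it is ``parallel to the dynamic case treated in \cite{forni2001generalized}'' (with details in \cite{gersing2023reconciling}), and your argument is precisely the static transcription of that Forni--Lippi proof---the Rayleigh bound for (ii)$\Rightarrow$(i) and a rescaled top-eigenvector sequence for the contrapositive of (i)$\Rightarrow$(ii). The only small thing to make explicit is that $\Gamma_z^n$ is real symmetric (the process is real-valued by A\ref{A: stat}), so the eigenvector $v^{(n)}$ can indeed be chosen in $\mathbb R^n$, ensuring $\hat c^{(n)}\in\mathbb R^{1\times\infty}$ as required by the definition of $\mathcal S(\Gamma_z)$.
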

The proof is parallel to the dynamic case treated in \cite{forni2001generalized}, for details see \cite{gersing2023reconciling}.

The set of all random variables that can be written as the mean square limit of a static average defines a closed subspace of $\cspargel(y_{it}: \in \mathbb N)$ \citep[the proof is analogous to][Lemma 6]{forni2001generalized}. Denote by ``$\mslim$'' the mean square limit. 
\begin{definition}[Static Aggregation Space]
The space $\mathbb S_t(y) := \left\{ z_t : z_t = \mslim_{k\to\infty} \widehat c^{(k)} y_t, \ \mbox{where} \ \left(\widehat c^{(k)}\right) \in \mathcal S(\Gamma_y) \right\} \subset \cspargel (y_t)$ is called Static Aggregation Space at time $t$. 
\end{definition}
Note that the static aggregation space changes with $t \in \mathbb Z$ as it emerges from aggregations over the cross-section of $y_{it}$ - holding $t$ fixed.

We may suppose that $(y_{it})$ has a static factor structure:
\begin{assumption}[r-Static Factor Structure]\label{A: r-SFS struct}
The process $(y_{it})$ can be represented as 
\begin{align}
    y_{it} = \Lambda_i F_t + e_{it} = C_{it} + e_{it}, 
\end{align}
where $F_t$ is an $r\times 1$ dimensional process with $r < \infty$, $\E [F_t]=0$, $\E [F_t F_t'] = I_r$ for all $t\in\mathbb Z$, $\E [F_t e_{it}] = 0$ for all $i \in \mathbb N$ and $t\in\mathbb Z$, and
\begin{itemize}
    \item[(i)]  $\sup_{n\in\mathbb N} \mu_r(\Gamma^n_C) = \infty$;
    \item[(ii)]  $\sup_{n\in\mathbb N} \mu_1(\Gamma^n_e) < \infty$. 
\end{itemize}
\end{assumption}
We can compute \textit{static low rank approximations} (SLRA) of $y_t^n$ of rank $r$ via ``static'' principal components. For this consider the eigen-decomposition of the variance matrix: 
\begin{align}
    \Gamma_y^n = P_{(n)}' M_{(n)} P_{(n)} , \label{eq: eigendecomp Gamma}
\end{align}
where $P_{(n)} = P_{(n)}(\Gamma_y^n)$ is an orthogonal matrix of row eigenvectors and $M_{(n)} = M_{(n)}(\Gamma_y^n)$ is a diagonal matrix of the $r$-largest eigenvalues of $\Gamma_y^n$ sorted from largest to smallest. Denote by $p_{nj}$ the $j$-th row of $P_{(n)}$ and by $P_n := P_{nr}$ the sub-orthogonal matrix consisting of the first $r$ rows of $P_{(n)}$. Analogously we write $M_n$ to denote the $r\times r$ diagonal matrix of the largest $r$ eigenvalues of $\Gamma_y^n$. Recall that we associate $r$ with the number of divergent eigenvalues of $\Gamma_y^n$ (see A\ref{A: r-SFS struct}). Set
\begin{align}
     \mathcal K_{ni} &:= \mathcal K_{ni}(\Gamma_y^n) := p_{ni}' P_n \quad \mbox{the} \ i\mbox{-th row of} \ P_n' P_n \label{eq: SAS of SLRA} \\
    C_t^{[n]} &:= P_{nr}' P_{nr} y_t^n = P_n' P_n y_t^n \label{eq: static rank r approx} \\
    C_{it, n} &:= \mathcal K_{ni} y_t^n \quad \mbox{the} \ i\mbox{-th row of} \ C_t^{[n]} \label{eq: static rank r approx i-th} . 
\end{align}
Recall that $C_t^{[n]}$ is the best (with respect to mean squared error) possible approximation of $y_t^n$ by an $r$ dimensional vector of linear combinations of $y_{1t}, ..., y_{nt}$. The $r\times 1$ vector $P_n y_t^n$ are the first $r$ principal components of $y_t^n$ and provide such a vector of linear combinations, though not uniquely. We call $C_t^{[n]}$ the \textit{static rank $r$ approximation} of $y_t^n$ which is unique. 

\begin{assumption_output}\label{A: r-SFS struct output}
There exists a natural number $r < \infty$, such that
    \begin{itemize}
        \item[(i)] $\sup_{n \in \mathbb N} \mu_r(\Gamma_y^n) = \infty$;
        \item[(ii)] $\sup_{n \in \mathbb N} \mu_{r+1}(\Gamma_y^n) < \infty$.
    \end{itemize}
\end{assumption_output}

\begin{theorem}[\citealp{chamberlain1983arbitrage}]\label{thm: charact r-SFS}
Consider a stochastic double sequence $(y_{it})$ in $L_2(\mathcal P, \mathbb C)$, then:\\
1. AY\ref{A: r-SFS struct output} holds if and only if A\ref{A: r-SFS struct} holds; in this case \\
2. $C_{it} = \mslim_{n\to\infty} C_{it, n}$;\\
3. $r, C_{it}, e_{it}$ are uniquely identified from the output sequence $(y_{it})$;\\
4. $C_{it} = \proj\left(y_{it} \mid \mathbb S_t(y)\right)$
\end{theorem}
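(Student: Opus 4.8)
The plan is to prove the four claims in a dependent order: the easy implication of part~1 first, then the intrinsic aggregation characterisation of part~4 (from which the uniqueness in part~3 is immediate), then the principal-component convergence of part~2, and finally the hard implication of part~1. For the direction A$\Rightarrow$AY, I would use the orthogonality of the two parts, $\Gamma_y^n=\Gamma_C^n+\Gamma_e^n$, together with the fact that $\Gamma_C^n=\Lambda^n\E[F_tF_t'](\Lambda^n)'=\Lambda^n(\Lambda^n)'$ has rank at most $r$, so that $\mu_{r+1}(\Gamma_C^n)=0$. Weyl's inequalities then give $\mu_r(\Gamma_y^n)\ge \mu_r(\Gamma_C^n)+\mu_n(\Gamma_e^n)\ge\mu_r(\Gamma_C^n)\to\infty$ and $\mu_{r+1}(\Gamma_y^n)\le\mu_{r+1}(\Gamma_C^n)+\mu_1(\Gamma_e^n)=\mu_1(\Gamma_e^n)\le\sup_n\mu_1(\Gamma_e^n)<\infty$, which are precisely AY(i) and AY(ii).

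Working under A (equivalently AY), I would prove part~4 by identifying $\mathbb S_t(y)$ with the $r$-dimensional span $\spargel(F_{1t},\dots,F_{rt})$. The inclusion $\mathbb S_t(y)\subseteq\spargel(F_t)$ holds because $(e_{it})$ is statically idiosyncratic (by A(ii) and Theorem~\ref{thm: charact stat idiosyncratic}), so every averaging sequence annihilates it in mean square and hence each aggregate of $y$ is an aggregate of $C$, i.e.\ a linear combination of the $F_{jt}$; the reverse inclusion uses pervasiveness, since $\mu_r(\Gamma_C^n)\to\infty$ lets one construct weight vectors with vanishing $\hat L_2^\infty(I)$-norm whose averages recover each $F_{jt}$. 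As $C_{it}=\Lambda_iF_t\in\mathbb S_t(y)$ while $e_{it}\perp F_{jt}$ for all $j$, orthogonal projection yields $\proj(y_{it}\mid\mathbb S_t(y))=C_{it}$. Since both $\mathbb S_t(y)$ and the number $r$ of divergent eigenvalues are determined by $(y_{it})$ alone, this characterisation also establishes the uniqueness asserted in part~3.

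For part~2, I would show that the rank-$r$ approximation $C_{it,n}=\mathcal K_{ni}y_t^n=p_{ni}'P_ny_t^n$ converges in mean square to $\proj(y_{it}\mid\mathbb S_t(y))=C_{it}$. The driving mechanism is eigenvalue separation: because the top $r$ eigenvalues of $\Gamma_y^n$ diverge while $\mu_{r+1}(\Gamma_y^n)$ remains bounded, the row space of $P_n$ asymptotically aligns with the range of the loadings $\Lambda^n$, so that $C_t^{[n]}=P_n'P_ny_t^n$ progressively filters out the idiosyncratic contribution. Making the coordinate-wise statement $\E|C_{it,n}-C_{it}|^2\to0$ rigorous — bounding the projection error through the gap between $\mu_r(\Gamma_y^n)$ and $\mu_{r+1}(\Gamma_y^n)$ — is the main obstacle, because principal components are identified only up to sign and rotation; the way around this is to track the rotation-invariant object $C_t^{[n]}$ and to identify its limit with the aggregation-space projection of part~4 rather than the eigenvectors themselves.

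Finally, for the hard direction AY$\Rightarrow$A, I would define $C_{it}:=\mslim_nC_{it,n}$ (existence being exactly what part~2 supplies) and $e_{it}:=y_{it}-C_{it}$, take the factors from the normalised principal components $M_n^{-1/2}P_ny_t^n$ (after passing to a limiting rotation), and then verify the two conditions of A: condition (ii) follows since $\mu_1(\Gamma_e^n)$ is controlled by $\mu_{r+1}(\Gamma_y^n)$ via AY(ii), so $(e_{it})$ is idiosyncratic by Theorem~\ref{thm: charact stat idiosyncratic}, while condition (i) follows because the $r$ retained eigenvalues diverge by AY(i). The delicate point, tied again to the main obstacle, is extracting a single consistent $r$-dimensional factor process in the limit despite the rotational indeterminacy of principal components across different cross-sectional dimensions $n$.
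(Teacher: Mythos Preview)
The paper does not give a self-contained proof of this theorem; it simply records that the argument is the static analogue of \cite{forni2001generalized} and refers to \cite{gersing2023reconciling} for details. Your ingredients --- Weyl's inequalities for A$\Rightarrow$AY, the identification of $\mathbb S_t(y)$ with $\spargel(F_t)$ via Theorem~\ref{thm: charact stat idiosyncratic} and pervasiveness, and the eigenvalue-gap mechanism for the convergence of $C_t^{[n]}$ --- are exactly the ones used in that standard approach, so in content your proposal is aligned with what the paper cites.

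There is, however, a genuine circularity in the order you chose. You prove parts~4 and~2 \emph{under A}: your argument for part~4 identifies $\mathbb S_t(y)$ with $\spargel(F_t)$ using the factors $F_t$, and your argument for part~2 speaks of $P_n$ aligning with the range of the loadings $\Lambda^n$. You then invoke part~2 to supply the existence of $\mslim_n C_{it,n}$ in the direction AY$\Rightarrow$A. But at that point A has not been assumed, so neither $\Lambda^n$ nor $F_t$ is available, and the convergence statement you need has not actually been established under AY alone. The parenthetical ``(equivalently AY)'' in your second paragraph presupposes the very equivalence you are still proving.

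The fix is the one used in \cite{forni2001generalized}: work from AY first. Under AY alone one shows either that $(C_{it,n})_n$ is Cauchy in $L_2$ directly from the gap $\mu_r(\Gamma_y^n)\to\infty$, $\sup_n\mu_{r+1}(\Gamma_y^n)<\infty$, or equivalently that $\mathbb S_t(y)$ has dimension exactly $r$ (the normalised top-$r$ eigenvectors furnish static averaging sequences giving $\dim\ge r$, while boundedness of $\mu_{r+1}$ forces $\dim\le r$). One then \emph{defines} $C_{it}:=\proj(y_{it}\mid\mathbb S_t(y))$, takes any orthonormal basis of $\mathbb S_t(y)$ as $F_t$, and verifies A\ref{A: r-SFS struct}(i)--(ii) for this representation; the easy direction A$\Rightarrow$AY is then done by your Weyl argument. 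With this ordering parts~2--4 become statements about the object just constructed and no longer feed back into part~1.
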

\noindent The proof is analogous to the proof of the dynamic case provided in \cite{forni2001generalized}. For a detailed exposition see \cite{gersing2023reconciling}. By Theorem \ref{thm: charact r-SFS}.1 the eigenvalue structure of $\Gamma_y^n$ in A\ref{A: r-SFS struct} is equivalent to the representation as a factor model. This justifies the name ``static factor sequence'':
\begin{definition}[$r$-Static Factor Sequence ($r$-SFS)]\label{def: r-SFS}
A stochastic double sequence $(y_{it})$ in $L_2(\mathcal P, \mathbb C)$ that satisfies A\ref{A: r-SFS struct} is called $r$-Static Factor Sequence, $r$-SFS.
\end{definition}
\end{document}